\def\ps@headings{%

\def\@evenhead{\scriptsize\thepage \hfil \leftmark\mbox{}}%

\def\@oddfoot{}%

\def\@evenfoot{}}
\newtheorem{theorem}{Theorem}
\begin{document}

\title{Towards  Loop-Free 
Forwarding of  Anonymous Internet Datagrams that Enforce Provenance }

\author{J.J. Garcia-Luna-Aceves$^{1,2}$ \\
$^1$Department of Computer Engineering,
 University of California, Santa Cruz, CA 95064\\
$^2$Palo Alto Research Center, Palo Alto, CA 94304 \\
Email: jj@soe.ucsc.edu }

\maketitle

\begin{abstract}

The  way in which addressing and forwarding are  implemented in the Internet  constitutes one of its biggest  privacy and security challenges. The fact that  source addresses in Internet datagrams cannot be trusted makes the IP Internet inherently vulnerable to  DoS and DDoS attacks. The Internet forwarding plane is open to attacks to the privacy of datagram sources, because  source addresses in Internet datagrams have global scope. 
The fact an Internet datagrams are forwarded based solely on the destination addresses stated in datagram headers and the next hops stored in the forwarding information bases (FIB) of  relaying routers allows Internet datagrams to traverse loops, which wastes resources and leaves the Internet open to further attacks. 
We introduce PEAR (Provenance Enforcement through Addressing and Routing), a new approach for  addressing and forwarding of Internet datagrams 
that enables anonymous forwarding of Internet datagrams, eliminates many of the existing DDoS attacks on the IP Internet, and prevents Internet datagrams from looping, even in the presence of routing-table loops. 

\end{abstract}


\section{Introduction}
 
One of the biggest  challenges facing the  future of the  Internet is 
that  its  vulnerabilities to DoS and DDoS attacks are {\em inherent} in the algorithms used in the  Internet to: (a) assign addresses to hosts, routers, and devices; 
(b) include source addresses in Internet datagrams; (c) map  addresses to routes; (d) bind names to locations in the Internet; and (e) forward Internet datagrams.

In theory, the goal of  assigning  Internet Protocol (IP) addresses to entities and including the source and destination IP address in each datagram  is to have  system-friendly identifiers  that:  state the origin and destination of Internet datagrams based on topological locations where content,  services, or devices  are made available; and  can be matched efficiently against stored information by routers and end systems.
However, an  IP address simply denotes the point of attachment of a host or router to a network with a given IP address range, without any topological information other than the aggregation of IP addresses.  Furthermore,  IP addresses are assigned to entities independently of the establishment of routes to services, content, devices, groups, or any entity in general.  As a result,  routing protocols (e.g., OSPF, BGP) and directory services (e.g., DNS) map  names used to denote entities (e.g., domain names) to  names that denote points of attachment to networks (IP addresses).  In addition to this,  the Internet  surrenders any control of the allocation of source IP addresses to Internet datagrams, because
the origin of an Internet datagram specifies the source address of the datagram  independently of any forwarding mechanism and end nodes are allowed to specify IP source addresses.

Because of the algorithms  used to assign  IP addresses  to entities and write  source addresses  into Internet datagrams,  the source address of an Internet datagram fails to convey its provenance correctly.  The recipient of an Internet datagram is unable to authenticate the claimed IP address of the source of the datagram based solely on  the basic  operation of  the forwarding plane of the IP Internet.
The receivers of Internet datagrams are forced to  use additional  mechanisms  and information to cope with the fact that a source address
need not denote the valid  provenance of an Internet datagram. Furthermore, these mechanisms are far more complex than the simple mechanism used by  sources of Internet datagrams  to state the origins of  datagrams.  In addition, 
IP addresses are globally unique and  assigned on a long-term basis, which makes it easier for  attackers to plan and mount attacks. This constitutes a major vulnerability to DDoS attacks in the current Internet architecture.

In addition to the above,  a router forwards an Internet datagram to its next hop based solely on the destination address stated in the datagram and the next hop listed in its forwarding information base (FIB). This is a problem in the presence of routing-table loops, because it is possible for  Internet datagrams to traverse loops. The only approach used today is to include a time-to-live (TTL) field in the datagram header  that is decremented at each hop of the path traversed by the datagram, and to drop an Internet datagram after the TTL value reaches zero.

The contribution of this paper is to present a set of algorithms that we call 
PEAR (Provenance Enforcement through Addressing and Routing), and which 
prevent Internet datagrams from traversing forwarding loops, makes the identity of the origin of an Internet datagram anonymous to the rest of the Internet, and enforces the provenance of an Internet datagram.

Section \ref{sec-prior} summarizes current defenses against DDoS flooding attacks.
The main objective of this review of prior work is to point out that  defending against  large DDoS flooding attacks is virtually impossible without changing the basic algorithms used  for the allocation of addresses to Internet datagrams, the mapping of addresses to routes and connections, and the protection of information carried in Internet datagrams.  
Currently, attackers spend far less energy and time mounting attacks than their targets  spend defending against them. 

Section \ref{sec-loop} introduces a simple approach to ensure that Internet datagrams never loop, even when routing tables contain  long-term or short-term routing-table loops. The approach operates by having the FIB entry for an address prefix state the next hop {\em and} the 
hop-count distance  to the prefix, and 
by using the TTL filed of a datagram to enforce an ordering constraint ensuring that a router can forward a datagram only to a next hop that is strictly closer to the intended destination.

Section \ref{sec-anon} introduces a receiver-initiated address allocation algorithm,  a simple address swapping function, and an on-demand routing algorithm operating in the data plane, which  together ensure that the origins of Internet datagrams remain anonymous to any  routers processing the datagram, and that anonymous sources of datagrams can receive traffic from public destinations over the reverse paths traveled by their anonymous datagrams.

\section{Current Defenses against DDoS Attacks}
\label{sec-prior}

The methods used to launch DDoS attacks today consist of: (a) sending malformed packets to the victims to confuse protocols or applications; (b) disrupting the connectivity of legitimate users by exhausting bandwidth,  router processing capacity, or 
network resources; and (c) disrupting services to legitimate users by exhausting the resources of servers (sockets, CPU, memory, or I/O bandwidth).   
We   address DDoS flooding attacks aimed at disrupting the connectivity and services offered to legitimate users.
Four types  of defenses against these attacks have been proposed to date \cite{abliz, spoofing, mirkovic, peng, taghavi}: Attack prevention, which aims at stopping attacks before they reach 
their targets; 
attack detection, which attempts to  identify the existence of attacks when they occur; attack source identification, which tries to locate the source of the attack independently of the information contained in packets used in the attack; and attack reaction, which  aims at eliminating or minimizing the impact of attacks.

The attack prevention approaches proposed to date focus on routers filtering IP datagrams with spoofed source IP addresses (e.g., \cite{rpf95, ferguson, jin03, save, dward, dpf}), or routers {\em adding} provenance information to IP datagrams  
The limitations of existing packet-filtering approaches 
are  that: (a) existing filters provide only coarse-grained descriptions of valid source IP addresses; (b) filters are vulnerable to asymmetric routing and the dynamics of routing protocols; (c) it can be difficult to determine which source IP addresses are valid because of the complexity of the  network topology; and (d) the exiting approaches used to  update filters either incur substantial signaling overhead or are very slow to update filters with new data.
Prior approaches based on adding provenance information require too much effort 
by the attacked network, because they involve 
the use and dissemination of secret keys to mark IP datagrams,  and some even require additional headers.

Several  defense approaches have focused on 
detecting specific DoS and DDoS attacks (e.g., \cite{tops, blaz01, multops, peng04, wang02}) or anomaly detection (e.g., \cite{denning, komp, cdis}). These approaches monitor the behavior of specific protocols or the network in an attempt to detect flow anomalies. 
The key limitation of  defense approaches based on detection is that they must rely on a number of assumptions regarding the behavior of legitimate users, and attackers can adopt countermeasures to evade detection.
Some DoS attacks can be detected, given that only a few computer systems are attackers and compromised system must behave differently than benign users to exhaust the resources of their targets. However, the problem is 
far more difficult  for DDoS attacks, which involve many compromised hosts that can mimic legitimate users and need not  change the normal pattern of protocol traffic to be effective. 

Prior approaches for the identification of  attack sources   have focused on tracing the origins of attacks by explicit signaling or  marking datagrams with the paths they traverse (e.g., \cite{burch, savage, snoeren, song, stone, pi, stackpi}). Some path marking techniques have also been combined with filtering. The main limitations of these approaches include that: it may be difficult to infer the attack paths in large DDoS attacks; some approaches can consume considerable storage, processing, and communication overhead; some path markings are not entirely unique; and tracebacks and markings become useful only after the attacks have consumed network resources and have reached their targets.
Similarly,  prior approaches aimed at filtering Internet datagrams with spoofed IP addresses are only partially effective, because they change router behavior to enact filtering without changing the way in which IP source addresses are assigned to Internet datagrams \cite{rpf95, ferguson, jin03, save, dward, dpf},  or the fact that datagram forwarding is independent of the distances to address prefixes stated in routing tables. 

Approaches that introduce additional  information to denote the provenance of a datagram are  difficult to implement and cannot be deployed at Internet scale, because they require public key infrastructure (PKI) support.  For example, HIP \cite{hip} requires a  PKI that is globally deployed to prevent attackers from simply minting unlimited numbers of host identifiers used in HIP.  AIP \cite{aip} on the other hand assumes a flat addressing space that cannot be applied at Internet scale, is vulnerable to malicious hosts creating unlimited numbers of EIDs, and does not offer an efficient way to recovering from compromised private keys corresponding to the AIP addresses of hosts or accountability domains. Similar limitations exist in approaches that add path information (e.g., \cite{spm, passport}).

Attack-reaction approaches seek to minimize the damage caused by attacks by protecting bottleneck resources \cite{millen}. The mechanisms that have been proposed include reducing the state needed 
to execute specific communication protocols (TCP in particular \cite{bern96, eddy, schuba}); manage resources, shape traffic, and increase the capacity of servers \cite{karg, spat99}; and 
secure the communication between confirmed users and servers \cite{sos, tupa03}.
Attempting to reduce connection state in TCP is a good objective in all TCP implementation modifications; however, they have not caught on because of the inconsistencies they introduce in  the establishment and termination of connections.  Approaches  that attempt to manage bottleneck resources are not effective, because such resources are shared fairly by DDoS traffic, leaving limited resources for valid users. On the other hand, prior approaches that attempt to secure the communication between valid users and resources are overly complex, because they  build new mechanisms to enforce provenance and obfuscation on top of existing Internet methods for naming, addressing and routing.

\section{Eliminating Forwarding Loops in The Internet}
\label{sec-loop} 

The Internet datagram forwarding algorithm is  based on
FIB entries that simply  state the next hops for  IP address prefixes. 
To forward a datagram intended to destination address $d$, an IP router looks up its FIB to obtain the best match for $d$ among the entries listed in its FIB for known IP address prefixes and decrements the TTL of the datagram. 

Two approaches are currently used to cope with  the  occurrence of forwarding loops 
in Internet forwarding. In the forwarding plane,  the time-to-live (TTL) field of an Internet datagram is used to discard a datagram after it circulates a forwarding  loop too many times. In the control plane, routing protocols (e.g., OSPF and EIGRP) are used to reduce or eliminate the existence of routing loops. However, even if a loop-free routing protocol is used in the control plane, a datagram may still circulate a forwarding loop while routing tables are inconsistent among routers.

We propose using FIBs  to ensure that  forwarding decisions in the data plane are consistent with  the routing information maintained by the routing protocol operating in the control plane.  
The FIB entry stored at router $i$  for address prefix $d^*$ states the  minimum-hop distance $H^i_{d^*}$ to the prefix in addition to the next hop $n$ to the prefix. We assume for convenience that a distance stated in a FIB is a minimum-hop count to a destination.
We denote by $P^k[s^k, d, T^k, ID^o](p)$ 
an Internet datagram sent  by router $k$ with a header that contains a source address of local scope ($s^k$), a destination address of global scope ($d$),   a TTL value ($T^k$), and 
an origin ID  ($ID^o$),  plus  payload data $p$. Router $i$ uses the following rule  to forward such a datagram using its FIB within a network.
\\

\noindent
{\bf  TTL-based FIB Rule (TFR):}  \\
Router $i$  accepts  to forward $P^k[s^k, d, T^k, ID^o](p)$  from router $k$ towards the best-match prefix $d^*$  for $d$ if
$T^k > H^i_{d^*}$.

If router $i$ accepts $P^k[s^k, d, T^k, ID^o](p)$, it sets  $T^i = H^i_{d^*}$
and forwards datagram
$P^i[s^i, d, T^i, ID^o ](p)$ to its next hop towards prefix $d^*$.  
A router simply drops a datagram intended for a destination address of global scope  with a TTL value that does not satisfy TFR.
$\blacksquare$

TFR consists of imposing an ordering constraint  on the traditional Internet datagram forwarding algorithm based on FIB entries, and making the TTL value of the datagram equal to the distance stored in the FIB for the intended destination, rather than simply decrementing its value.
The following theorem proves that TFR eliminates IP forwarding loops.
\\

\begin{theorem}
\label{theo}
No Interest  can traverse a forwarding loop in an IP network in
which TFR is used to forward datagrams.
\end{theorem}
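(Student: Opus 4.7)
The plan is to argue by contradiction, showing that the TFR acceptance rule forces the FIB distances stored at successive routers along any traversed path to be strictly decreasing, which is incompatible with the path returning to a router it has already visited.

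First I would suppose, for contradiction, that some datagram with destination address $d$ is forwarded along a cyclic sequence of routers $r_1 \to r_2 \to \cdots \to r_n \to r_1$, and let $d^*_j$ denote the best-match prefix for $d$ in the FIB of $r_j$. I would assume that the relevant FIB entries do not change during this traversal, which is the natural setting in which the very notion of a forwarding loop in the data plane is well defined. The key step then chains two immediate consequences of TFR. When $r_j$ accepts the datagram from its predecessor $r_{j-1}$, TFR requires $T^{r_{j-1}} > H^{r_j}_{d^*_j}$; when $r_j$ subsequently re-emits the datagram, the reset rule gives $T^{r_j} = H^{r_j}_{d^*_j}$. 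Substituting the second equation (at index $j{-}1$) into the first yields $H^{r_{j-1}}_{d^*_{j-1}} > H^{r_j}_{d^*_j}$ for every consecutive pair. Running this around the cycle produces the strict chain $H^{r_1}_{d^*_1} > H^{r_2}_{d^*_2} > \cdots > H^{r_n}_{d^*_n} > H^{r_1}_{d^*_1}$, which is impossible for a finite collection of integers.

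The main obstacle I would want to be careful about is the initial hop of the loop, where the substitution $T^{r_{j-1}} = H^{r_{j-1}}_{d^*_{j-1}}$ must be justified even when $r_1$ is first reached from a router outside the cycle (or is itself the datagram's origin). In either case the TTL reset rule applied by $r_1$ on its first visit guarantees that the TTL leaving $r_1$ is exactly $H^{r_1}_{d^*_1}$, so the strict inequality chain is anchored at the right value and closes into the desired contradiction. I would also note, for completeness, that the argument tacitly assumes every router along the loop executes TFR; any router not complying with TFR falls outside the hypothesis of the theorem. With these minor caveats addressed, the pigeonhole-style contradiction completes the proof.
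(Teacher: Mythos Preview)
Your proposal is correct and follows essentially the same contradiction argument as the paper: chain the TFR acceptance condition $T^{v_{k-1}} > H^{v_k}_{d^*}$ with the reset rule $T^{v_{k-1}} = H^{v_{k-1}}_{d^*}$ around the cycle to obtain a strict inequality of the form $H^{v_1}_{d^*} > H^{v_1}_{d^*}$. Your write-up is in fact a bit more careful than the paper's---you make the anchoring at the first hop explicit, allow the best-match prefix to vary per router, and flag the static-FIB assumption---but the underlying idea is identical.
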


 \begin{proof}
Consider a network in which TFR is used and  assume for the sake of 
contradiction  that  routers in a forwarding  loop  $L$ of $h$ hops  $\{ v_1 , $ $v_2 , ..., $ $v_h , v_1 \}$  forward a datagram for destination $d$  along $L$ with no router in $L$  detecting that  the datagram  has traversed  loop $L$. 

Given that  $L$ exists by assumption, router $v_k \in L$ must forward  
$P^{v_k}[s^{v_k}, d, T^{v_k}, ID^o](p)$ to router $v_{k+1}$ $ \in L$ for $1 \leq k \leq h - 1$, and router $v_h \in L$ must forward 
$P^{v_h}[s^{v_h}, d, T^{v_h}, ID^o](p)$ to router $v_{1} \in L$. 

According to TFR, if router $v_k$  ($1 < k \leq h$)
forwards 
$P^{v_k}[s^{v_k}, d, T^{v_k}, ID^o](p)$  to router $v_{k+1}$ as a result of receiving  
$P^{v_{k - 1}}[s^{v_{k - 1}}, d, T^{v_{k - 1}}, ID^o](p)$  from router $v_{k-1}$, then  it must be true that  
$T^{v_{k - 1}} > H_{d^*}^{v_k}$, where $d^*$ is the address prefix that is the best match for destination $d$.
Similarly, if router $v_1$ 
forwards $P^{v_{1}}[s^{v_{1}}, d, T^{v_{1}}, ID^o](p)$  to router $v_{2}$ as a result of receiving  $P^{v_{h}}[s^{v_{h}}, d, T^{v_{h}}, ID^o](p)$ from router $v_{h}$, then  it must be true that  
$T^{v_{h}} > H_{d^*}^{v_1}$.
However, these results constitute a contradiction, because they imply that  $H_{d^*}^{v_{k}} > H_{d^*}^{v_{k}}$ for $1 \leq k \leq h $. Therefore, the theorem is true.  
\end{proof}

Theorem 1 is  independent of whether the network is static or dynamic, or whether routers use single-path routing of multipath routing.
The ordering constraint  of TFR is essentially the same loop-free condition first introduced in DUAL \cite{dual}.
The difference between the way in which the ordering constraint is used in TFR and  in DUAL
is that TFR establishes distance-based ordering in the data plane  to forward datagrams based on existing FIB entries, while  DUAL  establishes distance-based ordering  in the control plane to build  FIB entries that are  then used to  determine how to forward datagrams.

As stated, TFR is only applicable within an autonomous system (AS)  in which  the same routing protocol is used to obtain minimum-hop distances to destinations. However, applying the approach used in TFR across 
autonomous systems  is relatively straightforward. A datagram needs to carry  two hop counts, one that states the AS hop count to the destination AS, and another one stating the distance to the intended destination within the same AS or the gateway connecting to the next AS along the path to the destination AS.

\section{Anonymous Datagram Forwarding  }
\label{sec-anon}

Since the introduction of datagram packet switching by Baran \cite{baran}, the identifiers used to denote the destinations {\em and} sources of datagrams have had global scope. Today, the FIBs maintained by Internet routers list entries that state  the next hop to each known IP address range of global scope.

However, five important observations can be made to argue that datagram forwarding does not have to be limited to Baran's original  design.  First, the purpose of having addresses in datagram headers is to enable   hop-by-hop datagram forwarding based on fast lookups of  destination-based routing tables, which does not require addresses to have  global, long-term meaning. Second, assigning IP addresses to hosts on a long-term basis  as if they were names is not necessary and enables attackers to take advantage of the quasi-static nature of such identifiers.  Third, there is no technical reason for the origin of an IP datagram to be the entity that assigns the source IP address of the datagram.  Fourth,  having the destination of an IP datagram ascertain the provenance of a datagram without any assistance from the routing infrastructure imposes too much effort on the destination and is done {\em after} datagrams with spoofed addresses have wasted network resources. Fifth, the packet-filtering schemes proposed to date to address DDoS attacks do not take full advantage of  the distance information maintained in routing tables. 

Our approach consists of using  IP addresses of local scope to denote 
anonymous sources of Internet datagrams, and introducing an on-demand routing algorithm to maintain routes to such addresses. Similar to the approach in \cite{handley}, we advocate Internet datagram forwarding over symmetric paths in order to support the forwarding of Internet datagrams using IP addresses with local scope.
The rest of this section describes the mechanisms for supporting anonymous datagram forwarding  in the Internet using a simple example. A formal description of the algorithms is omitted for brevity.

\subsection{Receiver-Initiated Source Address Assignment}

Each router announces  a continuous  interval of IP addresses ({\em local interval}) that the router considers valid local-scope addresses used to denote sources or destinations  of Internet datagrams. 

Router $i$   maintains a local-interval set table ($LIST^i$) that lists  the local interval announced by each neighbor router $k$ ($LI^i (k)$) to router $i$, and the 
local interval  announced by  router $i$ to its neighbors ($LI^i (i)$).
All local intervals are of equal length $|LI|$, and hence a   local interval is uniquely defined by the IP address at the start of the interval. 
The start of local interval $LI^i (k)$ is denoted by $LI^i (k)[s]$.
Given this,  router $i$ can easily map an IP address of local scope $x \in LI^i(i)$ to a corresponding IP address of local scope $y$  that is acceptable to neighbor $k$ with the following bijection, where $\epsilon$ is a constant known only to router $i$: 

{\small
\begin{equation}
\label{bijection}
y  \equiv    \epsilon + x - LI^i(i)[s] + LI^i(k)[s]  ~\mod~ |LI|
\end{equation}
}

\subsection{Forwarding Information Stored and Exchanged}

PEAR is transparent to end hosts.  Clients and servers simply use IP datagrams without modification. 
The addresses specified in a datagram sent between a host and router   may be of local or global scope. A host that originates datagrams must  use an IP address taken from the local-interval set used by the adjacent router.

To forward Internet datagrams from an anonymous datagram source to a destination with an IP address $d$ that has  global scope, 
PEAR  
uses three addresses to forward a datagram. A source IP address of local scope
denotes the originating router of the datagram at each hop of the path towards the destination.  An   IP address of global scope denotes the intended destination. In addition,   an  IP address of local scope, which we call  the origin ID, is used to denote the host that originated the datagram.

The TTL filed of a datagram is used to enforce 
TFR at each router to forward a datagram towards the best prefix match $d^*$ for destination address $d$ of global scope. A  router that receives a datagram from a host sets the TTL field of the datagram to equal its hop-count distance to the destination.

Because the source IP address and the origin ID stated in a datagram have local scope, routers along the path from an anonymous source to a destination with a global-scope address must establish forwarding state pointing back towards the anonymous source in order for datagrams from $d$ to the anonymous source to be forwarded.
Additional forwarding tables are needed to accomplish this.

Each router maintains  a {\em Hop-Specific Routing Table} ({\bf HRT}) to keep track of the next hops towards destinations denoted 
by  addresses that have hop-specific scope.
Router $i$ maintains $HRT^i$, which  is indexed using hop-specific IP addresses taken from $LI^i (i)$.   
An  entry in $HRT^i$ states  a hop-specific IP address (HIP) used to denote a destination   ($HIP(HRT^i) \in LI^i (i)$), a next hop to that destination,
$(n(HRT^i) \in$ $ N^i$), and a hop-specific address mapping used to handle collisions of hop-specific IP addresses $(map(HRT^i) \in LI^i (i)$).

$HRT^i [HIP, n, map]$  denotes a given  entry in $HRT^i$.  A HIP used as the source IP address in a datagram forwarded by router $i$
is denoted by $SHIP^D(i)$. Similarly, 
$DHIP^D(i)$ and $OHIP^D(i)$ are used to denote 
the destination IP address and the origin IP of a datagram forwarded by router $i$, respectively.The origin IDs    {\em are not} stored in HRT entries. Each router uses the same bijection of Eq. (1)  to map the origin ID stated in a datagram it receives to the origin ID of the datagram it forwards towards its destination.

Each router also maintains a {\em Destination Routing Table} ({\bf DRT}) to reduce the amount of  forwarding state needed in relaying routers. 
Router $i$ maintains $DRT^i$, which  indexed using origin IDs 
received in  datagrams and lists entries corresponding to sources of datagrams that have indicated a destination address that is directly attached to router $i$.
An  entry in $DRT^i$ states  an origin ID  ($(O(DRT^i)$) and a  hop-specific IP address ($(HIP(DRT^i)$), both of which are in the local interval defined for router $i$.

\begin{figure}[h]
\begin{centering}
    \mbox{
    \subfigure{
  \scalebox{.2}{
    \includegraphics{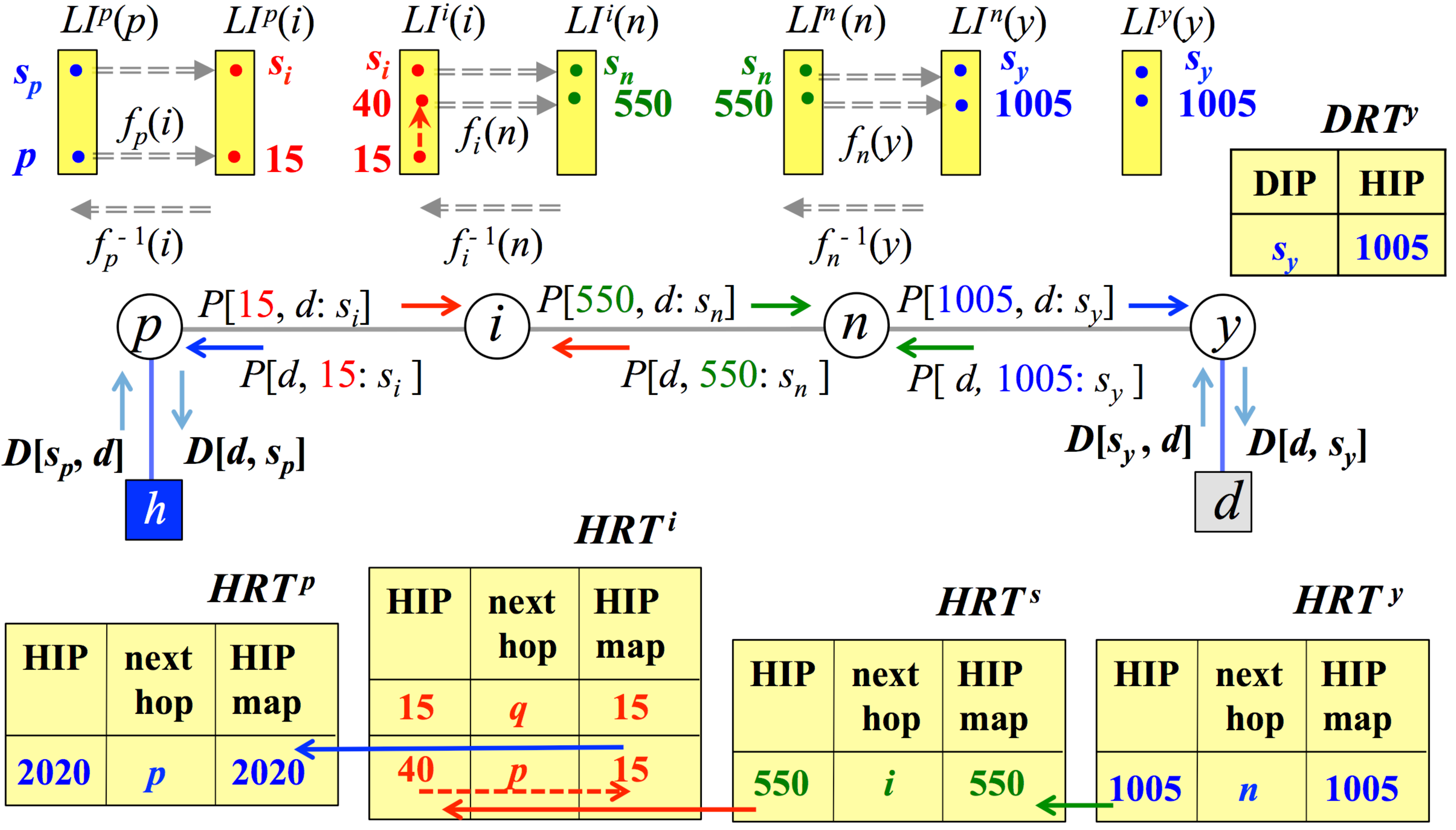}}}
     }
   \caption{Datagram forwarding 
   using  IP  address swapping with PEAR
   }
   \label{pear1}
\end{centering} 
\end{figure}  

\subsection{Forwarding of Datagrams from Anonymous Sources }

Routers that forward a datagram from an anonymous source to a destination address of global scope use their FIBs to determine the next hop and establish forwarding state in their HRTs as the datagram is forwarded to allow datagrams to flow back to anonymous sources.

Figure~\ref{pear1} shows an example of the  forwarding state established 
at relay routers between an anonymous source and a well-known destination to enable datagrams to be sent back to IP addresses of local scope.  The example in the figure 
shows a client $h$  initiating communication with a server assigned an address $d$ that   has network-wide scope and is part of an address prefix listed in the forwarding information bases (FIB)  maintained by  routers. It is assumed that  the client has obtained the IP address of the intended destination $d$ using the DNS or other means.  The IP address assigned to host $h$ is $s_p$ and has only local scope at router $p$, i.e., other routers do not associate $s_p$ with the same host $h$.
In the example, host $h$ must use  IP address $s_p$ as its source address.
Router $p$ accepts datagrams from host $h$ only if the IP source address in those datagrams is $s_p$. 
The TTL field is not shown in the figure, but TFR is assumed to be applied by each relay router.

Starting with the router that receives a datagram from an attached host (router $p$ in the example), routers establish forwarding state 
using a source IP addresses of local scope,  an  origin ID  that is also an IP address of local scope, and the destination address.
In the figure, a datagram forwarded  between two routers towards the destination with global-scope address $d$ is denoted by
$P[x, d: s_x]$, where $x$ is a source IP address of local scope, $d$ is the global-scope  address, and $s_x$ is an origin ID.
The source IP address and origin ID of a datagram  being forwarded by router $r$ to router $n$ along the path  from host $h$ to address $d$  ($P_{hd}$)  are taken from the local interval assigned by $n$ to $r$ ($LI^r(n)$) using the bijection stated in  Eq. (1).

In the example of Figure~\ref{pear1}, router $i$ has an exiting entry $HRT^i [15, q, 15]$  when it receives  datagram $P[15, d: s_i ]$ from router $p \not= q$. The destination IP address in the datagram has global scope and hence is forwarded  based on its FIB. However, the source IP address in the datagram received from $p$  collides  with  entry $HRT^i [15, q, 15]$.  Accordingly, router $i$ selects $HIP = 40$, which is not used in any 
existing HRT entry; 
creates  entry $HRT^i[ 40, p, 15]$,  sets  $SHIP^D(i) = f_i(n)[40] = 550$ and  $OHIP^D(i) = f_i(n)[s_i] = s_n$; and forwards
datagram $P[ 550, d: s_n ]$ to router $n$.  

When router $y$ receives  $P[ 1005, d: s_y ]$ from router $n$, it determines that the host with address $y$ is locally attached.
Accordingly, it creates an entry in $DRT^i$ with the tuple  $( s_y,  1005  )$ and passes datagram $D[s_y, d]$ to the host with IP address $d$.

\subsection{Forwarding of Datagrams to Anonymous Sources }

Routers use their HRTs to forward datagrams to destinations denoted with hop-specific addresses. Given that TFR is used when the entries in HRTs are established,  the paths implied by HRT entries are loop-free. 

In Figure 1, when router $i$ receives datagram $P[d, 550: s_i]$ from router $n$, the fact that the destination address in the datagram is in its local interval $LI^i(i)$ instructs router $i$ to use $HRT^i$ for forwarding rather than its $FIB^i$. To do so, router $i$  computes the inverse function  $f_i^{-1}(n)[550] = 40$.
Using $HIP = 40$ as the key in $HRT^i$, router $i$ obtains the next hop $p$ and 
sets $DHIP^D(i) = 15$. Router $i$ also computes $f_i^{-1}(n)[s_n] = s_i$, sets $OHIP^D(i) = s_i$, and forwards  forwards  $P[d, 15: s_i]$ to router $p$.
In turn, router $p$ uses $f_i^{-1}(p)$ to obtain the values of the destination address and origin ID for the datagram it should forward. However, given that 
$p = f_i^{-1}(p)[15]  \in L^p(p)$, router $p$ obtains the local host that should receive the datagram. It computes $f_i^{-1}(p)[s_i] = s$ and forwards
$D[d, s]$ to the host with IP address $s$.

\subsection{Enforcing Anonymity and Provenance of Datagrams  }

It is easy to show that PEAR enforces anonymity of datagrams, in the sense that no intruder can determine the origin of a datagram simply by monitoring traffic over a link and reading the headers of datagrams.  The reason an Internet datagram divulges the identity of its source is that source addresses have global scope and hence any relay router or intruder receiving the datagram can determine its origin. 

Reducing the scope of a  source address to the specific hop where the datagram is being forwarded still allows intruders and relay routers to infer the identity of the source in small networks. However, forcing each relay to use an address from an address space provided by the next hop eliminates the ability of an intruder or the receiving router to infer the true identity of the datagram origin. This is the case  even in small networks larger than two routers, unless the topology of the network is such that any traffic sent from a given router must be originated by the router (e.g., a leaf router).

Even tough PEAR uses IP addresses of hop-specific scope to provide anonymity, routers can enforce correct provenance of datagrams by eliminating the ability of hosts or routers to inject arbitrary source IP addresses in datagrams being forwarded among trusted routers implementing PEAR. 

An ingress router accepts a datagram from an attached host only if it has a source address deemed valid for that host. Furthermore, the source IP address and origin ID it uses in the datagram it forwards to the next hop must both be in the address space provided by the next-hop router, and the same is true at every hop along the path to the destination. Therefore, it follows that any malicious host or team of routers attempting to inject datagrams with spoofed IP source addresses can be identified by the first trusted router that does not receive datagrams with correct source addresses. Furthermore,  malicious datagrams using valid source addresses can be traced back to the first untrusted router injecting the traffic, given the inductive nature of the receiver-initiated source address assignment and the simple bijection used at each hop to swap IP addresses to and from anonymous destinations.

\section{Conclusions and Future Work}

We introduced the first approach for  Internet datagram forwarding that eliminates forwarding loops even if routing tables contain routing-table loops. Based on the ability to eliminate forwarding loops, we also presented an approach that enables the origins of Internet datagrams to remain anonymous while enforcing the correct provenance of datagrams.  As such, the proposed forwarding scheme can be applied to  eliminate many  of the existing DDoS attacks on the IP Internet,
which are enabled by the inability of Internet datagrams to enforce correct provenance.

Additional work is needed to formally describe the specific DDoS countermeasures enabled by PEAR. Equally important, TFR can be applied to inter-domain routing by utilizing two hop counts, one that takes effect across autonomous systems and another one that is used within an autonomous system. However, such cluster-based version of TFR has not been proven to be correct as is the case of TFR \cite{ifip2015, dual} and this is an obvious next step.

Lastly, even though we have advocated symmetric paths as additional protection against DDoS attacks \cite{handley}, the Internet has many asymmetric paths, and the proposed scheme must be extrapolated to the case in which datagrams from $s$ to $d$ follow a different path than datagrams from $d$ to $s$.

\section{Acknowledgments}

This work was supported in part by the Jack Baskin Chair of Computer Engineering at the University of California at Santa Cruz.


\end{document}